\def\ps@headings{%
\def\@oddhead{\mbox{}\scriptsize\rightmark \hfil \thepage}%
\def\@evenhead{\scriptsize\thepage \hfil \leftmark\mbox{}}%
\def\@oddfoot{}%
\def\@evenfoot{}}
\colorlet{shadecolor}{gray!50}
\newtheorem{theorem}{Theorem}[section]
\newtheorem{lemma}[theorem]{Lemma}
\newtheorem{proposition}[theorem]{Proposition}
\newtheorem{claim}[theorem]{Claim}
\def\E{\mathcal{E}}
\providecommand{\abs}[1]{\lvert#1\rvert}
\newcommand{\todo}[1]{{\begin{small}\sffamily \color{green}TODO:  #1 \end{small}}}
\newcommand{\cancel}[1]{}
\let\@copyrightspace\relax
\begin{document}

\title{From Caesar to Twitter \\ An Axiomatic Approach to Elites of Social Networks}
\numberofauthors{1}
\author{
\alignauthor
Chen Avin$^1$, Zvi Lotker$^1$, Yvonne-Anne Pignolet$^{2}$, Itzik Turkel$^1$\\
       \affaddr{$^1$Ben Gurion University of the Negev, Be'er-Sheva, Israel}\\
       \affaddr{$^2$ ABB Corporate Research, Baden, Switzerland}\\
       \affaddr{\{avin, zvilo, turkel\}@cse.bgu.ac.il, yvonne-anne.pignolet@ch.abb.com}}

\maketitle

\begin{abstract}
In many societies there is an \emph{elite}, a relatively \emph{small} group of \emph{powerful} individuals that is \emph{well connected} and highly \emph{influential}. 
Since the ancient days of Julius Caesar's senate of Rome to the recent days of celebrities on Twitter,  the size of the elite 
is a result of conflicting social forces competing to increase or decrease it.

The main contribution of this paper is the answer to the question how large the elite is at equilibrium.
We take an \emph{axiomatic approach} to solve this: assuming that an elite exists and it is \emph{influential}, \emph{stable} and either \emph{minimal} or \emph{dense},  we prove that its size must be $\Theta(\sqrt{m})$ (where $m$ is the number of edges in the network).


As an approximation for the elite, we then present an empirical study on nine large real-world networks of the subgraph formed by the highest degree nodes, also known as the rich-club.  Our findings indicate that elite properties such as \emph{disproportionate influence}, \emph{stability} and \emph{density} of  $\Theta(\sqrt{m})$-rich-clubs are universal properties and should join a growing list of common phenomena shared by social networks and complex systems such as ``small world,'' power law degree distributions, high clustering, etc.

\end{abstract}
\category{J.4}{Computer Applications}{Social and behavioral sciences}
\terms{Theory; Measurement; Human Factors}
\keywords{Social Networks; Axioms; Elite; Rich Club; Density; Influence; Structure}

\section{Introduction}

``How many scientists in your research area are in the group that \emph{controls} and \emph{influences} the progress of the field?," ``How many students are in the group 
that are responsible for the class \emph{spirit}?," ``How many representatives should a political party have in its decision making body?"
All these questions are about the \emph{elite} of a social network, namely about the \emph{size} of the subgroups that are, in some sense, controlling  the network. Can we have a \emph{rule of the thumb} to answer these questions raised for different networks? 
The current paper studies the structure of the elite and provides such a rule. But first, let us look at which other properties of social networks have been studied so far.

Notable examples for the basic \emph{universal properties} exhibited by complex systems and social networks are short average path lengths (a.k.a. the ``small world" phenomenon), high clustering coefficients, heavy-tailed degree distributions (i.e., scale-free networks), 
navigability, and more recently also dynamic properties such as densification and a shrinking diameter \cite{watts1998collective,albert2002statistical,leskovec2007graph,newman2010networks}. Traditionally, when ``new'' universal network properties are found by empirical measurements, a new random graph model has been proposed that generates networks which exhibit these properties. 
Therefore, the empirical findings led to the emergence of a variety of random graph models for evolution of social networks. 
And in turn, these evolutionary models have been used to predict and better understand the basic mechanisms that govern the behavior of social networks.
Some of the most popular random models are the {\em Barab{\'a}si-Albert Preferential Attachment} model~\cite{albert2002statistical}, the {\em Small-World} model~\cite{watts1998collective,kleinberg2000small}, the {\em Copy} model~\cite{kumar2000stochastic}, {\em Forest Fire} model~\cite{leskovec2007graph}, and more recently the {\em Affiliation Networks} model~\cite{lattanzi2009affiliation}.

  In the current work, we study 
 another basic and important phenomenon exhibited in the structure of
(social) networks: the existence of an \emph{elite} in the network.
In the Cambridge Dictionary the elite is defined as:
\begin{quote}
``\emph{The richest, most powerful, best educated or best trained group in a society.}''
\end{quote}

\noindent Other definitions (e.g., Wikipedia) emphasize in addition  that the elite group is ``small'' and ``well-connected.''

The first question we try to answer about the elite, is maybe the most basic one:
``What is the size of the elite, if it exists, in complex networks?'' Moreover, is the elite size a universal property of complex networks, similar to other universal properties we mentioned earlier? In particular, we investigate if the elite size can be expressed in relation to the total network size or other properties of the network. 

Instead of providing an evolutionary model, we take an \emph{axiomatic approach} 
to study these questions. 
The axiomatic approach has been used successfully in many fields of science, such as mathematics, physics, economy, sociology and computer science. See \cite{Andersen2008Trust-based,Geiger1991Axioms} for two examples in areas related to ours.
Perhaps the most well-known examples of this approach are Euclidean geometry and Newton's laws of motion.
In the early 20th century, the axiomatic approach was used successfully, e.g., by von Neumann in quantum physics and in utility theory (with Morgenstern),
and later in economy as well (e.g., by Nash, Vickrey, Aumann and Shapley).

Employing an axiomatic approach instead of providing a model has two main advantages for studying social and complex networks. 
While a basic random model provides us with a mechanism that generates networks with properties similar to the ones observed empirically,
 a suitable set of axioms attaches an ``interpretation'' or ``semantics'' to the observations.
Thus, a mechanism alone does not necessarily advance our understanding of the {\em meaning} of the phenomenon.  
In contrast when we describe a phenomenon in networks with a set of axioms, we can assess, to some degree, whether the axioms are ``believable''  
by how plausible their meaning is.
The second advantage is that once we agree on the axioms, it becomes possible to draw conclusions using logical arguments,
for example, inferring \emph{asymptotic} behavior, which is not always clear from empirical finding (i.e., since it is hard to determine the value of the constants involved). In our case, the axioms we propose allow us to answer the question of how large the elite is. 
As a consequence, the axiomatic approach is, in some sense, stronger than providing a particular model, since, if you agree on the axioms and their implications, 
\emph{every} model should be consistent with them. Furthermore, axioms can steer the search for new models in addition to empirical evidence.

Returning to elites, we observe that the size of the elite is determined by conflicting forces. Forces that try to increase  the elite size versus forces that try to reduce its size.  As a motivation for this tension and process we were inspired by historical anecdotes.
Ancient Roman history provides an example for forces that aspire to increase the size of the Roman elite, namely, the senate~\cite{humphries2003roman,kraus2010livy}.
When established around 750 BC, the senate included representatives of the first 100 families. When the population grew, it was extended, reaching 300 members in 509 BC, and then 600 senators in 80 BC. It was Julius Caesar who finally increased the senate to around a thousand senators. Interestingly, this number is about the square-root of the Rome population at the time, one million.
On the other hand, the French Revolution 1789--1799, provides an example for a society that reduced its elite size.
 During the Reign of Terror from 1793 to 1794 thousands of French elite members, the  nobility, were executed by the guillotine~\cite{greer1935incidence}.
 
Going back to our days, in the first part of this paper we make an effort to formalize some of these social forces as axioms, and to derive the elite size at equilibrium. Following the elite definitions mentioned earlier, our axioms try to capture, in a formal graph-theoretic way, that the elite is \emph{influential}, \emph{stable} and either of \emph{minimum-size} or \emph{dense}. The most important contribution of this paper is that these axioms imply an elite size of  $\Theta(\sqrt{m})$, where $m$ is the number of edges in the network.
Note, that in many networks $m$ is not known, but if we assume additionally that $n$ (the number of nodes in the network which is often easier to determine than the number of edges) and $m$ are of the 
same order of magnitude, i.e., the network is sparse, then our \emph{rule of thumb} for the size of the elite is $\Theta(\sqrt{n})$.

Intuitively, the best candidates for the elite group are nodes with the highest degree in the network, also called the \emph{rich-club} \cite{zhou2004rich}. These nodes are well-known to exist following the scale-free nature of complex systems and the power-law degree distribution that enforce such ``superstars''; nodes with degree well above the average degree (a.k.a. ``hubs'' and ``connectors'') \cite{gladwell2000tipping}.
Previous research on the rich-club phenomenon already demonstrated the existence of some interesting properties like the tendency of high degree nodes to be well connected among each other \cite{colizza2006detecting,mcauley2007rich,zhou2004rich}.
The importance of the rich-club with respect to the whole network was considered in~\cite{xu2010rich} which shows that the rich-club connectivity has a strong influence on the assortativity and transitivity of a network. 
Based on these findings, the rich-club can be seen as an approximation of the elite of a complex network due its influence on the rest of the network. We refer to Section~\ref{sec:relwork} for a discussion of related notions.

In the second part of the paper we perform an systematic empirical study on the structural properties of the $k$-\emph{rich-club subgraph}; the graph that is induced by the $k$ highest degree nodes. To this end, we analyze nine real-world complex networks. Our results indicate that the elite size, as well as others of its structural properties, are universal, as the measured quantities are similar on all the networks under scrutiny.

\subsubsection*{Summary of our Results}

\textbf{Elite Axioms and Size}:
We propose a set of axioms comprising the elite's influence, stability, density and compactness. From these axioms one can conclude that the size of the elite is in the order of $\sqrt{m}$, where $m$ is the number of edges in the network.

In contrast, previous research on the most influential group considered it to be a \emph{constant fraction} of the network , for example in the 80--20 rule of Pareto in economics \cite{pareto1971translation} in attacks to break networks~\cite{albert2000error},
or in rich-club studies \cite{zhou2004rich, mislove-2007-socialnetworks}.
The fact that the elite is of size $\Theta(\sqrt{m})$, an \emph{order of magnitude} less, can hence have very important consequences for several domains, as discussed in more detail in Section~\ref{sec:discussion}.\\

\noindent\textbf{Empirical Findings}:
We present measurements on nine existing social networks and complex systems for a variety of parameters for rich-clubs of growing size,  providing empirical evidence that for a  $\sqrt{m}$-rich-club, the following statements hold for its structure and its interaction with the whole network\footnote{Our findings on the density and influence reinforce, on additional real-world networks, some of the claims of previous rich-club studies. Our other findings reveal unknown features on the rich-club. In this paper we study the previously known and unknown properties together since they were never considered before in the context of a $\sqrt{m}$-elite size.}.


\emph{Inner Structure}:
(i) The induced subgraph of the $\sqrt{m}$-rich-club of existing social networks is \emph{dense}, in particular much denser than the whole network.
(ii) The largest connected component of this subgraph contains all but a handful (<33) of the rich-club nodes.
(iii) The average degree of the $\sqrt{m}$-rich-club in its induced subgraph is significantly higher than the average degree of the whole networks.
Note that these findings are \emph{not} a mere consequence of the fact that the rich-club contains the highest degree nodes (cf. to networks  with the same number of edges generated according to some complex network models discussed later).

\emph{Influence}:
The elite has a ``disproportionate'' power towards the society. In graph terms,
a significant constant fraction of nodes outside the $\sqrt{m}$-rich-club have a neighbor in the $\sqrt{m}$-rich-club. Related to this is the fact that the size of the cut between the $\sqrt{m}$-rich-club and the rest of the network is a significant constant fraction of all edges in the network.

\emph{Stability}:
The elite is stable against ``outside'' pressure from the society, i.e., there is a balance between the exposure to opinions inside and outside the elite. In graph terms, the ratio between the number of edges from the $\sqrt{m}$-rich-club to the remaining nodes (``outside'' pressure) and the number of internal edges (``inside'' pressure) of the $\sqrt{m}$-rich-club is constant.

\emph{Symmetry}:
In directed networks the $\sqrt{m}$-rich-club is significantly more symmetric than the whole network.

\cancel{
\textbf{Evolution}:
There is a high correlation between the high degree nodes and the \emph{seniority} of the network members. Note that while some models predict this well, others  do not.
}
Some of the above properties might have been known on an anecdotal level or may seem obvious; however, they have not been measured together for growing rich-clubs and they cannot be explained by \emph{only} considering the fact that the $k$-rich-club contains the highest degree nodes. It does not hold for arbitrary networks that the structure of the $k$-rich-club has these properties. In order to demonstrate this, we compare our findings on real-world data to the properties the popular Erd\"{o}s-Renyi model, the Barab\'{a}si-Albert model and the Affiliation networks model exhibit. 

In the next section we introduce the notions, data sets and models used in this paper, followed by a set of axioms and the derivation of the elite's size in Section~\ref{sec:size}.
After presenting our measurement results in more detail in Section~\ref{sec:measure} and reviewing related work in Section~\ref{sec:relwork}, we discuss our findings and some major open questions raised by them in Section~\ref{sec:discussion}.

\section{Datasets and Models}
Today several popular online social networking sites like Facebook, Twitter, Flickr,
YouTube, Orkut, and LiveJournal exist. These networking sites are based on an
explicit user graph to organize, locate, and share content as well as contacts.
In many of these sites, links between users are public and can be crawled
automatically. This allows researchers to capture and study a large fraction
of the user graph. The obtained data sets present an ideal opportunity to
measure and study online social networks at a large scale.
Mislove et al.~\cite{mislove-2007-socialnetworks,mislove-2008-flickr,viswanath-2009-activity} have collected data from the most prominent
online social networks and made them available to the research community. We
used their data on Facebook, Livejournal, Orkut, Flickr and YouTube
in addition to data provided by the Stanford Large Network Dataset Collection
(http://snap.stanford.edu/data/) on Autonomous Systems (AS) and Wikipedia link graphs. Furthermore, we
studied the rich-club of Twitter \cite{kwak2010twitter} and a citation network
(who cites whom) derived from DBLP and the ACM digital library.

To find out if the rich-club of real life complex networks is structurally different from arbitrary networks and to examine the rich-club of some well known graph models, we generated some graphs according to the
the Erd\"{o}s-Renyi random graph model, the Barabasi-Albert model and the Affiliation model.
One of the first and most simple models for networks is the Erd\"{o}s-Renyi (ER) random graph model \cite{erdos1960evolution}. In this model an edge between each pair of nodes exists with equal probability $p$, independently of the other edges.
One model to generate scale-free graphs exhibiting some properties found in real networks is the \emph{Barab\'{a}si-Albert} (BA) model \cite{albert2002statistical}. It captures growth and preferential attachment. More precisely it models the evolution of a social network, where nodes join the network and build links to existing nodes, based on their degree. The higher the degree of a node, the more likely it is to attract new nodes to connect to it. 
The network starts as an initial network of $m_0$ nodes.
New nodes are added to the network one at a time. Each new node is connected to $m'\leq m_0$ existing nodes with a probability that is proportional to the number of neighbors that the existing nodes already have. Formally, the probability $p_i$ that the new node is connected to node $i$ is \cite{albert2002statistical}
 $ p_i = {deg(i)}/{\sum_j deg(j)},$
where $deg(i)$ is the degree of node $i$.
In this work we adopt the convention $m_0=m'$ and start with an initial network forming a complete graph (clique).
%
Another model, based on a bipartite \emph{affiliation} graph from which a social network is derived, was presented in \cite{lattanzi2009affiliation}. The affiliation graph models the fact that people (``actors'') are typically connected to other people via ``societies'' (e.g., schools we visited, streets we live in, companies we work for, etc.). The social network is obtained by folding the bipartite graph, i.e., by generating  an (undirected) edge in the social network for paths of length two in the affiliation graph. The affiliation graph evolves by letting new actors and societies copy another node's neighbors with some probability in addition to preferential attachment edges based on the degree.
For each of these models we produced graphs with 1 million nodes. The parameters we used were $p = 0.00002$ for the ER model, $m' = 10$ for the BA model, and $c_q = c_u = 2$ (the number of edges added in 1 evolution step), $s = 2$ (the number of edges added by preferential attachment) and $\beta = 0.5$ (how often the left/right side of the bipartite graph grows).
We decided to use these models as most other models known to us are based on variations and combinations of these models.
All data sets (with degree rank as node identifiers) that we used in this paper are publicly available by emailing avin@cse.bgu.ac.il.

\section{Elite Size - Axiomatic Approach}\label{sec:size}

\begin{figure}
	\centering
	\includegraphics[width=0.8\columnwidth]{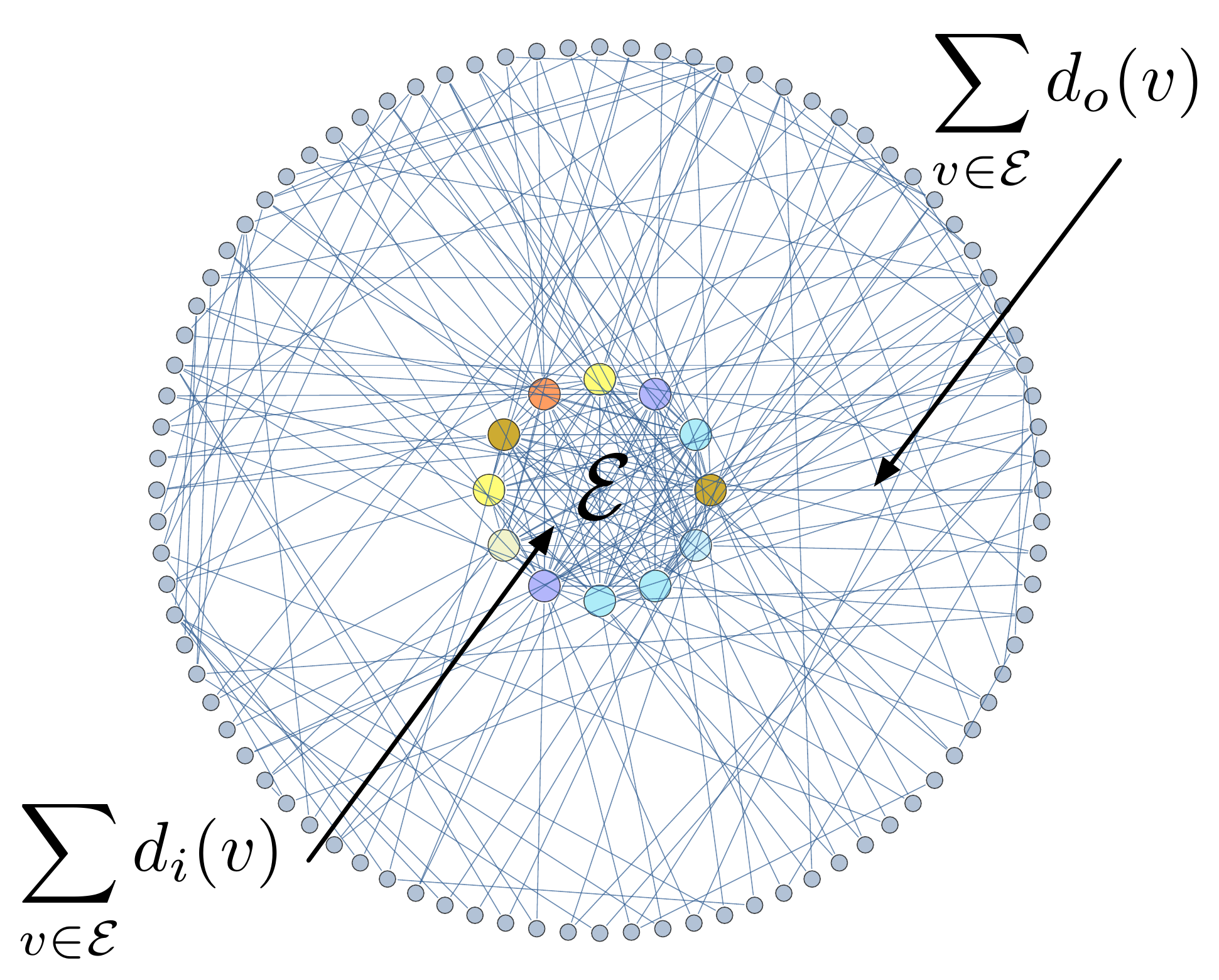}
	\caption{Graphical demonstration of the elite and parameters in the Axioms. $m$ is the total number of edges in the graph, $\E$ is the elite set (colored nodes),  $\sum_{v\in \E} d_o (v)$ is the total number of out-going edges from the elite and $\sum_{v\in \E} d_i (v)$ is the total number of edges within the elite.}
	\label{fig:combined_density}
\end{figure}

In this section we address one of the basic and most intriguing questions about the elite: what is its size? Many definitions of the elite indicate
that the elite is a \emph{small} group compared to the whole population. But what is the ``right'' size? What is small?

To answer these questions, and to explain our empirical results we take an axiomatic approach: we assume that an elite features some basic properties in order to maintain its power in the society and based on these we infer its size.
We claim that the elite must be socially \emph{stable} and \emph{influential}. 
 By adding either a \emph{density} or \emph{min-size} property,
we conclude that the elite size is in the order of $\Theta(\sqrt{m})$, where $m$ is the total number of edges. When the number of
edges is proportional to the number of nodes $n$, then the elite size turns out to be $\Theta(\sqrt{n})$.\footnote{While traditional research on social networks
assumed that $m=\Theta(n)$, more recent models and observations\cite{lattanzi2009affiliation,leskovec2007graph} show that $m=\omega(n)$, so $m$ and $n$ might differ in the order of magnitude.}

More formally, let the set $\E\subset V$ denote the elite consisting of $\abs{\E}$ nodes. For a node $v\in \E$ let $d_i(v)$ denote the internal degree of $v$ within $\E$, i.e., how many neighbors of $v$ belong to the set $\E$; analogously $d_o(v)$ denotes the number of neighbors of $v$ that are outside of the set $\E$, i.e., in $V\setminus\E$. An illustration of these notions can be found in Fig.~\ref{fig:combined_density}.
Let $c_1, c_2, c_3$ be some constants. We postulate the following axioms for the elite set $\E$.

\begin{table*}[htb!]
	\centering
		\begin{tabular}{|l| r | r |r|r|r|r|}
\hline
Data 		&	\multicolumn{1}{c|}{n} 	&	\multicolumn{1}{c|}{m}	&	 \multicolumn{1}{c|}{$\sqrt{m}$-rich-club}	&	 \multicolumn{1}{c|}{Influence - $c_1$}	&	\multicolumn{1}{c|}{Stability - $c_2$}	&	\multicolumn{1}{c|}{Density - $c_3$} \\ \hline
Youtube 	&	1138500 & 2989945 	& 1729 	& 35.60\% &	7.90\% & 5.70\%  \\ \hline
Facebook &	63732 &	817031 &	903 &	19.30\% & 	31.90\% &	12.40\%	\\ \hline
Livejournal& 	5204177 & 49163589 &	7011 &	9.50\% &	20.30\% &	3.90\%	\\ \hline
Orkut	& 	3072442 &	117174174 &	10824 &	10.20\% &	26.20\% &	5.40\%	\\ \hline
Flickr	& 	2302926 &	22830535 &	4778 &	34.80\% &	39.60\% &	27.50\%	\\ \hline
AuthorCitations &	85055 &	1234030 &	1110 &	31.80\% &	24.40\% &	15.60\%   \\ \hline
Wikipedia &	1870710 &	36473378 &	6039 &	41.80\% &	6.00\% &	5.00\%  \\ \hline
AS 		&	33560 &	75621 &	274 &	65.30\% &	16.80\% &	22.20\% \\ \hline \hline
Average 	 & & & &	  	  	31.04\% &	21.64\% &	12.21\% \\ \hline
STD 	  	& & & &  	  	18.37\% &	11.43\% &	8.90\%  \\ \hline \hline
ER Model	&1000000 &	9974503 &	3158 &	\cellcolor{gray!35}1.10\% &	\cellcolor{gray!35}0.50\% &	\cellcolor{gray!35}0.00\% \\ \hline
BA Model	&1000000 &	9973255 &	3158 &	11.20\%& 	5.30\% &\cellcolor{gray!35}	1.20\% \\ \hline
Affiliation Model & 	1000000 &	32092651 &	5665 &	11.60\% &\cellcolor{gray!35}	220.50\% & \cellcolor{gray!35}	51.30\% \\ \hline
\end{tabular}
\caption{Basic properties of the examined networks and models (\# of nodes, \# of edges) and the Axiom constants $c_1, c_2, c_3$ when the elite is the $\sqrt{m}$-rich-club. The average and standard deviation pertain to the real networks only. A gray cell indicates a large deviation of the model from the average of the real data.
}
\label{table:prop}
\end{table*}
\begin{itemize}
\item[1.]  \textbf{Influence.} The number of out-going edges from the elite is a constant fraction of the total number of edges in the network.
\begin{align}
\sum_{v\in \E} d_o (v)  \geq  c_1\cdot m  \label{eq:influence}
 \end{align}
  for $0 < c_1<1$.\\
 \emph{Motivation}: This axiom captures the power and influence that are associated with the elites. In complex networks an edge can be interpreted as a source of influence, thus a powerful group must control a large fraction of edges in the network.
\item[2.]  \textbf{Stability.} The number of edges within the elite is proportional to the number of out-going edges from the elite.
\begin{align}
\sum_{v\in \E} d_i(v)  \geq  c_2 \cdot \sum_{v\in \E} d_o(v) ,\label{eq:stability}
 \end{align}
 for $0 < c_2<1$.\\
 \emph{Motivation}: In order to adhere to its opinion, the elite must be able to resist ``outside'' pressure; otherwise individuals in the elite
 will change their option and will be influenced instead of being influential. Consider a node $v\in\E$ that makes decision based on a weighted majority  of her friends. Since people in the elite are, by definition,  more powerful (e.g., rich, educated, etc.), elite members' opinions count for more when $v$ consults it neighborhood. If we weigh friends within the elite with a power of 1, then the weight of the outside friends $w$ will be less than 1.
 $c_2$ represents the (average) power we associate with friends outside the elite in a stable case. Therefore, we expect that $c_2 < 1$ in real networks.


\item[3.]  \textbf{Minimum-Size/Compactness.} The number of elite members tends to be as small as possible. \\
\emph{Motivation}: This axiom is based on some basic principals in science like \emph{Principle of minimum energy}, \emph{Principle of least action} and \emph{Occam's razor}. Given that other things are equal (such as influence and stability), the elite size will tend to be as small as possible.
In social terms this can be motivated by the idea that if the elite holds a given revenue or power, it will attempt to split it between as few members as possible.

\item[4.]  \textbf{Density.} The elite is dense
\begin{align}
\sum_{v\in \E} d_i(v) \geq  c_3 \cdot {\abs{\E} \choose 2} \label{eq:density}
 \end{align}
 for $0 < c_3<1$.\\
\emph{Motivation}: The goal of the density property is to capture the idea that the elite is a social ``clique'' where ``everyone knows everyone.''
The density property holds when each member of the elite knows (on average) a constant fraction of the elite members.
\end{itemize}
Based on these axioms, we can infer the size of the elite $\abs{\E}$. First we show a lower bound.
\begin{claim}\label{clm:lower}
If the elite satisfies Axioms 1 and 2 the size of the elite is: $\abs{\E} = \Omega(\sqrt{m})$.
\end{claim}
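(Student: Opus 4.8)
The plan is to bound the number of internal edges of $\E$ from above by the trivial bound $\sum_{v\in\E} d_i(v) \le \abs{\E}(\abs{\E}-1) < \abs{\E}^2$ (since each internal edge is counted twice, $\sum_{v\in\E} d_i(v) = 2 e_{\mathrm{in}}(\E) \le \abs{\E}(\abs{\E}-1)$), and then to chain the two axioms together to force $\abs{\E}^2$ to be at least a constant multiple of $m$. Concretely, Axiom~2 gives $\sum_{v\in\E} d_i(v) \ge c_2 \sum_{v\in\E} d_o(v)$, and Axiom~1 gives $\sum_{v\in\E} d_o(v) \ge c_1 m$. Composing these yields $\sum_{v\in\E} d_i(v) \ge c_1 c_2 \, m$.

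Putting the two inequalities together, I would write
\begin{align}
\abs{\E}^2 > \abs{\E}(\abs{\E}-1) \ge \sum_{v\in \E} d_i(v) \ge c_2 \sum_{v\in \E} d_o(v) \ge c_1 c_2 \, m, \label{eq:lowerchain}
\end{align}
so that $\abs{\E} > \sqrt{c_1 c_2}\,\sqrt{m}$. Since $c_1, c_2$ are positive constants independent of the network, this is exactly the claim $\abs{\E} = \Omega(\sqrt{m})$. The argument is short and uses only the two stated axioms plus the elementary fact that a graph on $\abs{\E}$ vertices has fewer than $\abs{\E}^2$ edge-endpoints inside it.

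There is essentially no serious obstacle here; the only thing to be careful about is the bookkeeping on the factor of two in the handshake-type identity $\sum_{v\in\E} d_i(v) = 2 e_{\mathrm{in}}(\E)$, which is harmless since we only need an upper bound of the form $O(\abs{\E}^2)$ and not a tight constant. One should also note that the argument does \emph{not} use Axioms~3 or 4 at all — only influence and stability are needed for the lower bound — which is consistent with the statement of the claim; the density or min-size axiom will be what is needed for the matching upper bound $\abs{\E} = O(\sqrt{m})$ in the companion claim.
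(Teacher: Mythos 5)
Your proposal is correct and follows exactly the same route as the paper's own proof: bound $\sum_{v\in\E} d_i(v)$ above by $\abs{\E}^2$, chain Axioms 1 and 2 to get $\sum_{v\in\E} d_i(v) \ge c_1 c_2\, m$, and conclude $\abs{\E} = \Omega(\sqrt{m})$. Your extra remark about the factor of two in the handshake identity is a harmless refinement that the paper elides.
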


\begin{proof}
First note that $\abs{\E}^2 > \sum_{v\in \E} d_i(v)$. Using Eq. \eqref{eq:influence} and  \eqref{eq:stability} we get
$\abs{\E}^2 > \sum_{v\in \E} d_i \geq c_1 \cdot c_2\cdot m$ 
which implies
$\abs{\E} \ge \Omega(\sqrt{m}).$
\end{proof}

It is important to note that Axioms 1 and 2 alone do not guarantee a \emph{small} elite. Take for example a linear size elite, i.e., $\abs{\E}=\Theta(m)$, of constant degree, e.g., a constant degree expander \cite{hoory2006expander}. If additionally each member of the elite is connected to a constant number of nodes outside of the elite, the resulting elite is both stable and influential. In order to derive a small elite size, we must assume additional axiom(s). We next show that assuming either Axiom 3 or 4 enables us to conclude an elite size in the order of $\sqrt{m}$.

\begin{theorem}
If the elite satisfies Axioms 1, 2 and 3 ,the size of the elite is: $\abs{\E} = \Theta(\sqrt{m})$ and the elite is dense. 
\end{theorem}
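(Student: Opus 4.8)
The plan is to pair the lower bound already in hand with an upper bound forced by the minimum-size axiom, and then read off density as a corollary. Claim~\ref{clm:lower} (Axioms 1 and 2) gives both $\abs{\E}=\Omega(\sqrt m)$ and the working inequality $\sum_{v\in\E}d_i(v)\ge c_1c_2 m$, so it remains to prove the upper bound $\abs{\E}=O(\sqrt m)$; density will then follow.

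First I would record the elementary ``degree budget'' $\sum_{v\in\E}d_i(v)+\sum_{v\in\E}d_o(v)=\sum_{v\in\E}\deg(v)\le 2m$, since the elite degrees are part of the graph's total degree $2m$; in particular $\sum_{v\in\E}d_i(v)\le 2m$. This already gives a dichotomy under Axioms 1 and 2: either $\abs{\E}=O(\sqrt m)$, and then $\sum_{v\in\E}d_i(v)\ge c_1c_2 m=\Omega(\abs{\E}^2)$ makes the elite dense for free; or $\abs{\E}=\omega(\sqrt m)$, in which case the average internal degree is $\sum_{v\in\E}d_i(v)/\abs{\E}\le 2m/\abs{\E}=o(\sqrt m)$, so the subgraph induced on $\E$ is sparse relative to its size. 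Hence it suffices to rule out the second case using minimality.

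The heart of the argument is to show that a sparse but influential-and-stable elite is never of minimum size: from such an $\E$ one extracts a strictly smaller $\E'\subseteq\E$ that still satisfies Axioms 1 and 2. The natural construction is a peeling step --- iteratively keep the half of the current set whose vertices have the larger internal degrees (equivalently, pass to a dense core of $G[\E]$ via a core-decomposition). Influence is the benign constraint: when vertices leave the elite their former internal edges become out-edges of $\E'$, so $\sum_{v\in\E'}d_o(v)$ stays $\ge c_1 m$ as long as the cut between $\E'$ and the discarded part is large, which it is for a set that is still $\omega(\sqrt m)$ and not too thin. Stability is the demanding constraint: $\sum_{v\in\E'}d_i(v)\ge c_2\sum_{v\in\E'}d_o(v)$ forces $\E'$ to retain $\Omega(m)$ of its \emph{own} internal edges on only $O(\sqrt m)$ vertices --- that is, $\E'$ must itself be dense. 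Proving that the peeled set keeps enough internal edges to stay stable, rather than merely shedding them into the cut, is the main obstacle; it is exactly the place where one must use that the network genuinely contains a compact, dense, influential core --- equivalently, that the densest $O(\sqrt m)$-vertex subgraph of $G[\E]$ still has $\Omega(m)$ edges --- which is the structural fact that the empirical part of the paper documents. Granting this, each peeling step preserves Axioms 1 and 2 while strictly shrinking the set, so minimality is contradicted unless $\abs{\E}=O(\sqrt m)$.

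The density conclusion is then immediate: writing $\abs{\E}\le A\sqrt m$, we get $\sum_{v\in\E}d_i(v)\ge c_1c_2 m\ge (2c_1c_2/A^2)\binom{\abs{\E}}{2}$, so Axiom 4 holds with a constant $c_3=2c_1c_2/A^2$ (shrunk into $(0,1)$ if necessary). Together with Claim~\ref{clm:lower} this gives $\abs{\E}=\Theta(\sqrt m)$ and the elite is dense.
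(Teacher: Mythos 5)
Your density argument is correct and is essentially the paper's: assuming the size bound, $\sum_{v\in\E}d_i(v)\ge c_1c_2m$ together with $\abs{\E}=O(\sqrt m)$ immediately yields Axiom 4 with an explicit constant (the paper phrases it contrapositively, via $\sum_{v\in\E}d_i(v)=o(\abs{\E}^2)=o(m)$). The problem is the upper bound, where you and the paper part ways and where your route has a genuine gap.

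The paper reads Axiom 3 as pinning the elite at the minimum size attainable under Axioms 1 and 2; since Claim~\ref{clm:lower} shows that minimum is $\Omega(\sqrt m)$, the minimality axiom directly gives $\abs{\E}=O(\sqrt m)$ (implicitly assuming a $\Theta(\sqrt m)$-size influential, stable set exists in the network). You instead try to prove the stronger statement that \emph{every} set of size $\omega(\sqrt m)$ satisfying Axioms 1 and 2 contains a strictly smaller such set, via a peeling argument. That step is not carried out: you yourself concede it only works ``granting'' that the densest $O(\sqrt m)$-vertex subgraph of $G[\E]$ retains $\Omega(m)$ edges, which is an unproven structural hypothesis about the network --- and one that essentially presupposes the conclusion (the existence of a compact, dense, stable, influential core). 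It is also false in general: your own counterexample class from the paper's discussion (a constant-degree expander elite of size $\Theta(m)$ with constant out-degree per member) satisfies Axioms 1 and 2 yet contains no dense $O(\sqrt m)$-vertex subgraph with $\Omega(m)$ internal edges, so no peeling procedure can succeed there. Moreover, the claim that influence survives peeling is not verified: removing vertices deletes their out-edges from the sum, and converting former internal edges into out-edges of $\E'$ need not compensate. The fix is not to repair the peeling but to invoke Axiom 3 the way the paper does: minimality plus the $\Omega(\sqrt m)$ lower bound of Claim~\ref{clm:lower} is the entire upper-bound argument, after which your density computation closes the proof.
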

\begin{proof}
The upper bound of  $\abs{\E} =O(\sqrt{m})$ follows directly from Claim \ref{clm:lower} and Axiom 3. Now assume the elite is not dense.
Then $\sum_{v\in \E} d_i = o(\abs{\E}^2) = o(m)$. But this contradicts $\sum_{v\in \E} d_i \geq c_1 \cdot c_2\cdot m$; thus, the elite must be dense.
\end{proof}

\begin{theorem}
If the elite satisfies Axioms 1, 2 and 4, then the size of the elite is: $\abs{\E} = \Theta(\sqrt{m})$ and the elite is compact. 
\end{theorem}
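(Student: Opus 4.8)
The plan is to reuse the already-established lower bound and supply a matching upper bound from the density axiom. Claim~\ref{clm:lower} gives $\abs{\E} = \Omega(\sqrt{m})$ using only Axioms 1 and 2, so all that remains is to show $\abs{\E} = O(\sqrt{m})$ from Axiom 4, and then to argue that the resulting elite is \emph{compact}.

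For the upper bound I would start from the observation that the quantity $\sum_{v\in\E} d_i(v)$ counts (twice) the edges of the subgraph induced by $\E$, which is a subgraph of the whole network; hence $\sum_{v\in\E} d_i(v) \le 2m$. Substituting this into the density inequality~\eqref{eq:density} gives $c_3\binom{\abs{\E}}{2} \le 2m$. Since $\binom{\abs{\E}}{2} = \Theta(\abs{\E}^2)$ (e.g.\ $\binom{\abs{\E}}{2}\ge \abs{\E}^2/4$ for $\abs{\E}\ge 2$), this yields $\abs{\E}^2 = O(m/c_3) = O(m)$, i.e.\ $\abs{\E} = O(\sqrt{m})$. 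Combined with Claim~\ref{clm:lower} this proves $\abs{\E} = \Theta(\sqrt{m})$.

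It then remains to justify ``compact.'' Here I would invoke Claim~\ref{clm:lower} in full generality: \emph{every} vertex set satisfying the influence and stability axioms has size $\Omega(\sqrt{m})$, so $\sqrt{m}$ is, up to constants, the smallest conceivable elite size. An elite that additionally attains size $O(\sqrt{m})$ is therefore of minimum order, i.e.\ it satisfies the min-size/compactness property of Axiom 3. This mirrors the previous theorem, where Axioms 1, 2, 3 forced density; here Axioms 1, 2, 4 force compactness.

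There is no deep obstacle, but the one point needing care is the \emph{direction} of the density axiom: inequality~\eqref{eq:density} is a \emph{lower} bound on the number of internal edges and in isolation says nothing about $\abs{\E}$. The content of the argument is to pair it with the trivial \emph{upper} bound ``internal edges $\le$ all edges,'' exploiting that a dense graph on $s$ vertices needs $\Theta(s^2)$ edges while the host network only has $m$ of them. I would also pin down the convention used for directed networks (counting each internal arc once, or symmetrizing $\E$) so that $\sum_{v\in\E} d_i(v) = O(m)$ is literally correct; with that in place the computation is essentially a one-liner.
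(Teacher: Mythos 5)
Your proposal is correct and follows essentially the same route as the paper: the lower bound is inherited from Claim~\ref{clm:lower}, and the upper bound comes from pairing the density inequality~\eqref{eq:density} with the trivial bound $\sum_{v\in\E} d_i(v) \le 2m$, which is exactly what the paper's compressed chain $\abs{\E}^2 > c'\cdot m \ge c''\cdot\abs{\E}^2$ encodes. Your write-up merely makes explicit the steps (internal edges bounded by $m$, $\binom{\abs{\E}}{2}=\Theta(\abs{\E}^2)$, and the reading of ``compact'' as attaining the $\Omega(\sqrt{m})$ floor) that the paper leaves implicit.
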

\begin{proof}
As before, $\abs{\E}^2 > \sum_{v\in \E} d_i(v)$ and using Eq. \eqref{eq:influence},  \eqref{eq:stability} and \eqref{eq:density}, we get
$\abs{\E}^2 > c'\cdot m \ge c'' \cdot \abs{\E}^2$ for some constants $c', c''$. Hence, it must hold that $\abs{\E} = \Theta(\sqrt{m})$ which means the elite is compact (i.e., of minimum possible size when assuming Axiom 1 and 2 ).
\end{proof}



An important note is in place: the Axioms and results above hold for \textbf{undirected} networks only.  For \textbf{directed} networks like Twitter, a more general treatment is needed and we leave this for future work.

In context of the above results, we would like to mention the pioneering work of Linial et. al.~\cite{linial1993sphere} and Peleg ~\cite{peleg2002local} where a notion related to elites is discussed, namely coalitions. Coalitions are subsets of nodes that dominate neighborhood majority votings, they are also called monopolies. Among the results regarding monopolies are lower bounds for the size of these coalitions, namely $\Omega(\sqrt{m})$ nodes are necessary to control the outcome of all neighborhood majority votings of general graphs and the number of edges within the monopoly has to be in the order of $m$. Interestingly, we use a different perspective, but our Axioms 1 and 2 lead to the same result for a powerful elite. 

In the next section we present measurements to support our claim that the elite size in social networks is of size $\Theta(\sqrt{m})$.
We show that for existing networks a rich-club of size  $\Theta(\sqrt{m})$ is both stable and influential, where this is not the case for standard models like the BA and Affiliation model.  As discussed earlier we will focus on the rich-club of different sizes as an approximation for the elite of the network.

\section{Empirical Results}\label{sec:measure}
We studied a variety of parameters for rich-clubs of growing size for nine real networks and three theoretical models (Erd\"{o}s-Renyi (ER) model, Barab\'{a}si-Albert (BA) model and the Affiliation model). We paid most attention to $k$-rich-clubs for $k$ in the order of $\sqrt{m}$ to corroborate our hypothesis that rich-clubs around this size satisfy our axioms.
First we examined the constants $c_1, c_2, c_3$ relevant to our axioms properties of influence, stability and density, and then other relevant properties.

\begin{figure}
	\centering		
	\includegraphics[width=\columnwidth]{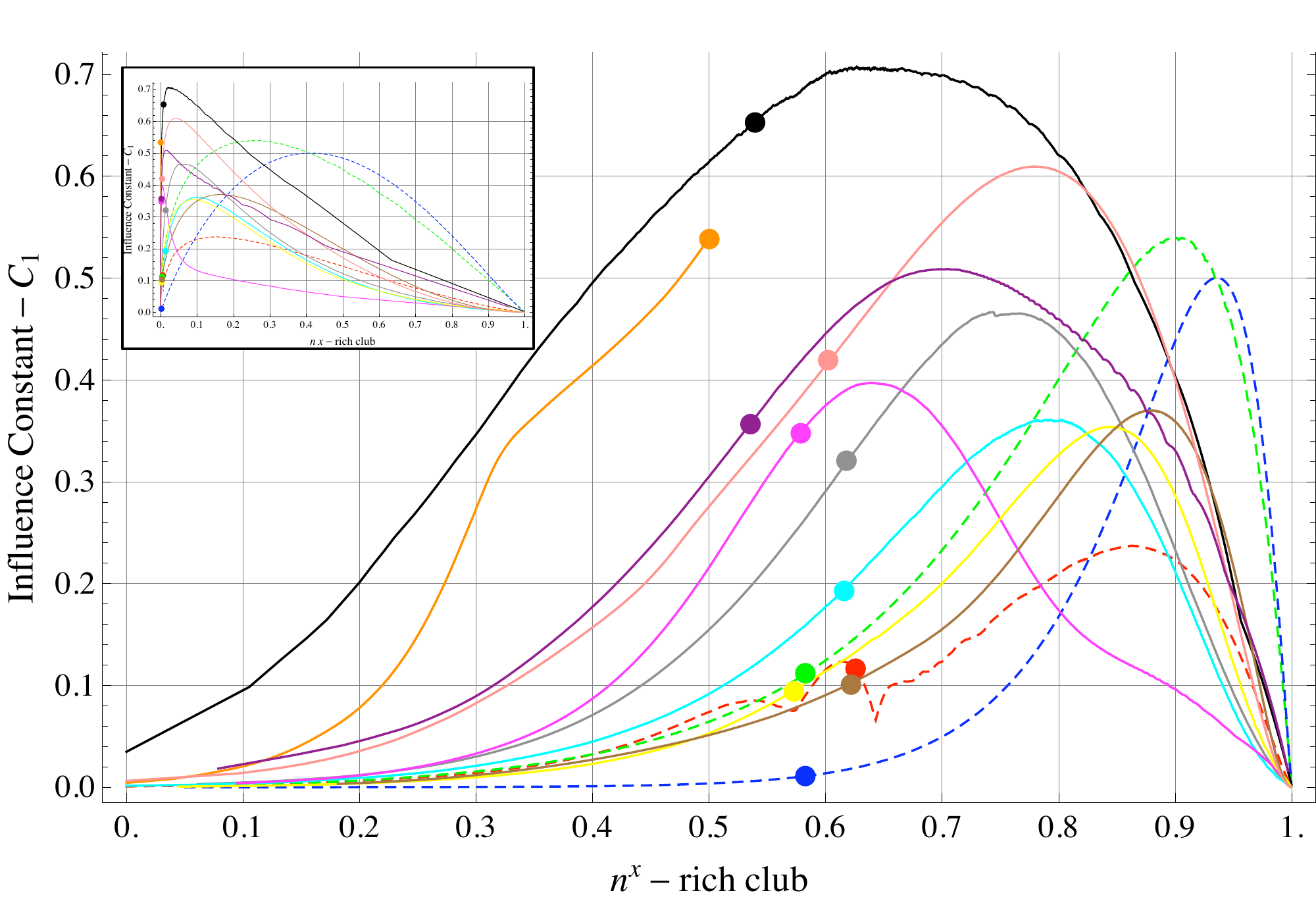}
	\footnotesize{(A)}	
	\includegraphics[width=\columnwidth]{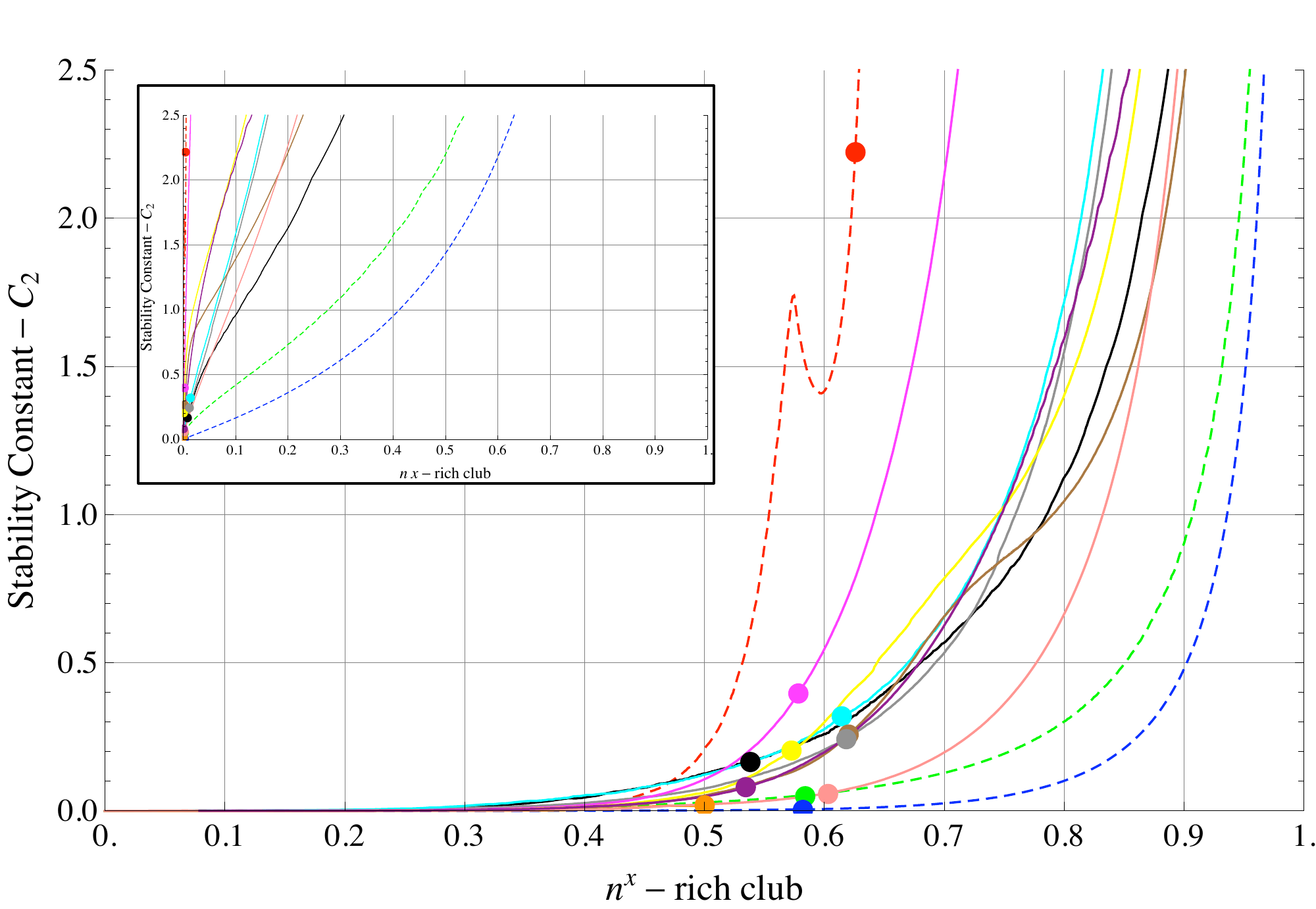}
	\footnotesize{(B)}	
	\includegraphics[width=\columnwidth]{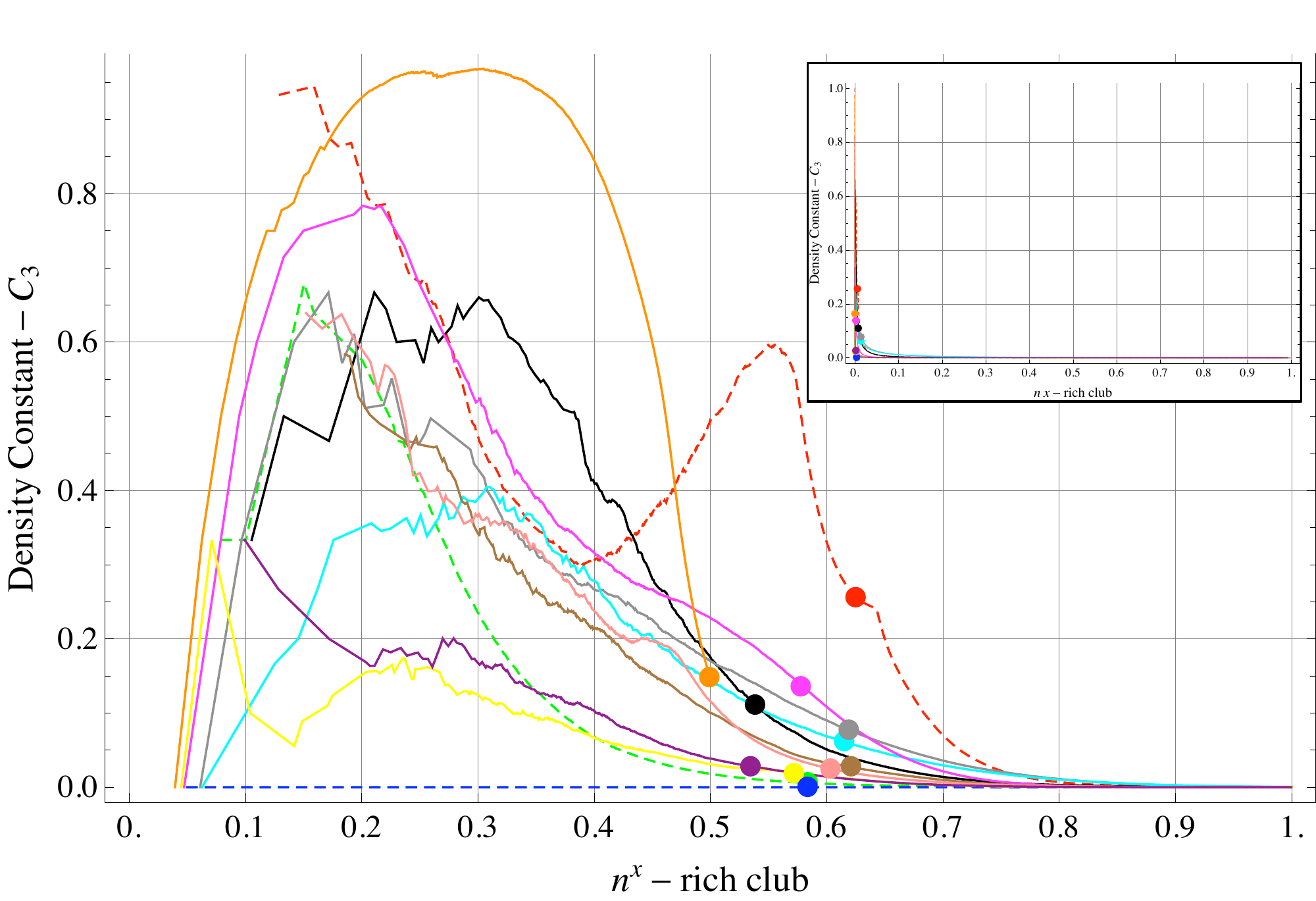}
	\footnotesize{(C)}	
	\includegraphics[width=\columnwidth]{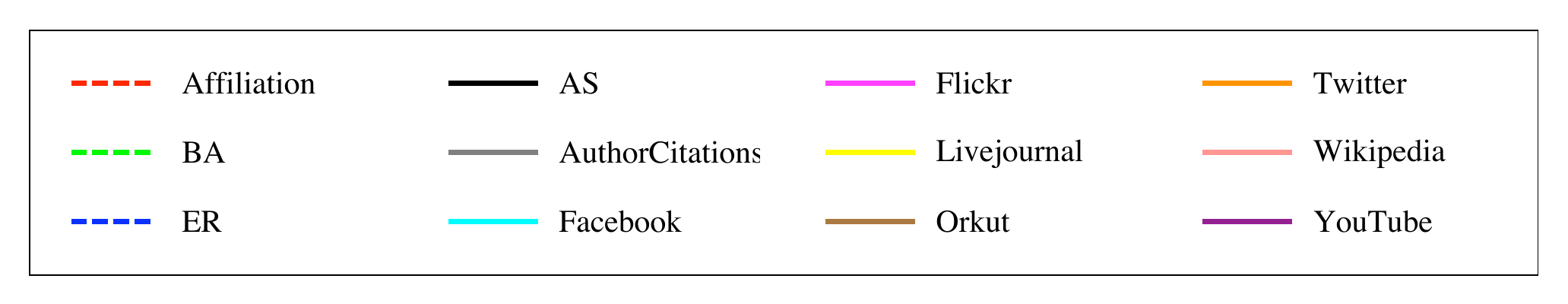}
	\caption{Three graphs that show the values for influence, stability and density from the axioms on nine real networks (solid lines) and three models (dashed lines) for $n^x$-rich-clubs. (A) the influence constant - $c_1$, (B) the stability constant - $c_2$ and (C) the density constant - $c_3$. The dots indicate where the $\sqrt{m}$-rich-club is located, and at $x=0.5$ the value for the $\sqrt{n}$-rich-club is depicted. The small figures are the same but with a linear scale, i.e., $k=nx$.}
	\label{fig:crossing}
\end{figure}

\subsection{Axiom constants - $c_1, c_2, c_3$}
Table~\ref{table:prop} gives a summary on basic properties of the networks under scrutiny
such as the number of nodes and edges ($n$ and $m$ respectively) and the influence, stability and density constants ($c_1, c_2, c_3$, respectively) for the $\sqrt{m}$-rich-club of each network. Gray cells indicate a large deviation by a model.
i) Influence - the influence constant fluctuates between $9\%-66\%$ in the real networks with high variance, $c_1$ of the BA and Affiliation models both fall in this range, but clearly (and intuitively) in the ER model, the  $\sqrt{m}$-rich-club is not influential enough.
ii) Stability - All the constants for the real networks are below one as expected. For the Affiliation model $c_2 > 2, (220\%)$, which contradicts our Axiom.
iii) Density - Real networks show a density constant between $4\%$ and $28\%$, while ER and BA are more sparse and the Affiliation model is more dense. We discuss these findings in more detail later.

We now take a broader view of the results and check the above constants for an increasing size of $k$-rich-club from $k=1$
to $k=n$. In order to compare networks of different sizes, we used plots where the $x$ axis (the rich-club size, $k$) is normalized to $[0,1]$. To focus on small $k$-rich-clubs, the $x$-axis describes the rich-club size for growing roots of the network size $n$, i.e., at $x \in[0,1]$ the measurement point for the $n^x$-rich-club, in particular at $x=0.5$ we have the values for the $\sqrt{n}$-rich-club. We emphasize the size of the $\sqrt{m}$-rich-club by adding a large dot at its location which differs for each network. In other figures (usually in a smaller size) we present the result on a linear $x$-axis, where $k=nx$ which demonstrates that interesting phenomena occur for a very small $k$ compared to the network size (i.e.,  in the order of $\sqrt{m}$).

Fig. \ref{fig:crossing} (A), (B) and (C) presents the results for influence, stability and density of growing rich-clubs, respectively. 
Regarding influence - $c_1$, we observe some similar characteristics for all networks:  i) influence increases monotonically until the maximum influence which is achieved at a rich-club size much larger than $\sqrt{m}$. ii) the constant $c_1$ is bound away from 0 at $\sqrt{m}$ (except in the ER model). Moreover, influence in most real networks is larger than in the models for low order $k$. Two extremes cases are the  AS (Internet routers) network and Twitter. In particular Twitter is a directed network where the $k$-rich-club seems to have a much larger (directed) influence. Since we have data of Twitter only for a $k$-rich-club up to $k=\sqrt{n}$, not the whole scale is presented for this network. 

For the stability constant $c_2$ we also observe similar behavior among the networks: stability monotonically increases with the rich-club size. Except for Twitter and the ER model, the constant is clearly bounded away from $0$ for $\sqrt{m}$-rich-club. 

Recall that our axiom results about the constants hold (and are well-defined) for the undirected case only, whereas Twitter is a highly asymmetric directed network. 
As noted when discussing Table~\ref{table:prop}, $c_2$ of the Affiliation model exceeds 1 for the $\sqrt{m}$-rich-club (and much earlier), which 
contradicts the second axiom. 

An important observation is that to increase both influence and stability, a larger $k$-rich-club is better. In contrast this is not the case with density. 
The density of the $k$-rich-club exhibits the opposite behavior (common to all networks, but the ER model): the maximum density is achieved 
at a rich-club size significantly smaller than $\sqrt{n}$, and from there on the density decreases monotonically (except for the Affiliation model). 
So for the $k$-rich-club to be denser, it must be smaller, while to be influential and stable its size needs to be larger.

We conjecture that these conflicting forces determine the ``right'' size of the elite.
The above empirical results (strengthened by the axioms) indicate that the balance between these forces, or the equilibrium, is achieved when the 
rich-club is in the order of $\sqrt{m}$. The graphs generated by the three theoretical models do not exhibit all of the properties (influence, stability and density) in the right scales as shown by the real networks.

To make the statements about the models more formal, we establish the following.
\begin{proposition}[ER model rich-club density] \label{thm:ER} $\\$
The expected number of edges in the $k$-rich-club of a ER graph is $o(k^2)$ for $p=o(1)$ and large enough $k$.
\end{proposition}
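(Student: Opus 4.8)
The plan is to reduce the statement to a bound on the expected maximum degree of $G(n,p)$, which is easy to control by concentration. Write $R_k$ for the $k$-rich-club --- the subgraph induced on the $k$ vertices of largest degree --- and let $e(R_k)$ be its number of edges; let $d_{(1)}\ge d_{(2)}\ge\cdots\ge d_{(n)}$ be the degrees of the whole graph in decreasing order, and $\Delta=d_{(1)}$ the maximum degree. The first step is the deterministic observation that, in \emph{any} graph, every edge of $R_k$ has both endpoints in $R_k$ and so contributes $2$ to $\sum_{v\in R_k}d(v)=\sum_{i=1}^{k}d_{(i)}$; hence
\begin{align}
e(R_k)\ \le\ \tfrac{1}{2}\sum_{i=1}^{k}d_{(i)}\ \le\ \tfrac{k}{2}\,\Delta .
\end{align}
Taking expectations over $G(n,p)$ gives $\mathbb{E}[e(R_k)]\le \tfrac{k}{2}\,\mathbb{E}[\Delta]$, so it suffices to prove $\mathbb{E}[\Delta]=o(k)$.

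Next I would bound $\mathbb{E}[\Delta]$ by a standard argument: each degree is $\mathrm{Bin}(n-1,p)$ with mean $\mu=(n-1)p$, so a Chernoff bound together with a union bound over the $n$ vertices yields $\Delta\le \mu+C\bigl(\sqrt{\mu\log n}+\log n\bigr)$ with probability at least $1-1/n$ for an absolute constant $C$; since $\Delta\le n$ always, the contribution of the remaining event is negligible and $\mathbb{E}[\Delta]=O(np+\log n)$ (the $\log n$ term only mattering in the ultra-sparse regime $np=O(\log n)$). Combining with the first step, $\mathbb{E}[e(R_k)]=O\bigl(k\,(np+\log n)\bigr)$.

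The final step is to observe that this is $o(k^2)$ exactly when $k=\omega(np+\log n)$, which is the precise meaning of ``large enough $k$'' here. In particular it holds for every $k=\Omega(\sqrt m)$ --- so for the $\sqrt m$-rich-club of interest --- as soon as $p=o(1)$: then $\sqrt m=\Theta(n\sqrt p)$ while $np=o(n\sqrt p)$ since $1/\sqrt p\to\infty$, and $\log n=o(\sqrt m)$ under the mild assumption that the graph has more than polylogarithmically many edges ($m=\omega(\log^2 n)$). The only genuine work is the estimate $\mathbb{E}[\Delta]=O(np+\log n)$ and this last asymptotic comparison; the comparison is also where the hypothesis $p=o(1)$ is really used and cannot be removed, since for $p=\Theta(1)$ the $\sqrt m$-rich-club comprises a constant fraction of the vertices and therefore spans $\Theta(m)=\Theta(k^2)$ edges in expectation.
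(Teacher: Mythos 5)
The paper states Proposition~\ref{thm:ER} without any proof (in contrast to the BA-model result, which does get one), so there is no argument of the authors to compare yours against; your write-up is effectively the only proof on the table, and it is essentially correct. The deterministic step $e(R_k)\le\tfrac12\sum_{i=1}^{k}d_{(i)}\le\tfrac{k}{2}\Delta$ is valid, the estimate $\mathbb{E}[\Delta]=O(np+\log n)$ via Chernoff plus a union bound over the $n$ vertices is standard (and you correctly handle the low-probability tail using $\Delta\le n$), and your check that $k=\sqrt{m}=\Theta(n\sqrt{p})$ satisfies $k=\omega(np+\log n)$ when $p=o(1)$ and $m=\omega(\log^2 n)$ is exactly what the paper needs, since the proposition is invoked only to argue that the $\sqrt{m}$-rich-club of an ER graph is not dense. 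The one caveat concerns your reading of ``large enough $k$'': the bound $e(R_k)\le k\Delta/2$ is lossy, and it yields $o(k^2)$ only for $k=\omega(np+\log n)$, whereas the statement is in fact true for much smaller $k$ as well --- for $\omega(\log n)\le k\le O(np)$ one can instead bound $\mathbb{E}\bigl[\max_{|S|=k}e(S)\bigr]$ by a Chernoff bound on $e(S)\sim\mathrm{Bin}\bigl(\tbinom{k}{2},p\bigr)$ together with a union bound over all $\tbinom{n}{k}$ candidate sets, giving $\tbinom{k}{2}p+O\bigl(k^{3/2}\sqrt{p\log n}+k\log n\bigr)=o(k^2)$. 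So you prove the proposition under a restricted (though clearly sufficient for the paper's purposes) interpretation of its hypothesis, and it would be worth flagging that the restriction stems from the crudeness of the $k\Delta/2$ bound rather than from the statement itself. Your closing observation that $p=o(1)$ cannot be dropped is also correct.
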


It is also not hard to show that the BA model does not fulfill the density requirement.

\begin{theorem}[BA model rich-club density]\label{thm:ba_rich-club_density} $\\$
The expected number of edges in the $k$-rich-club of a Barabasi-Albert graph is linear, i.e., $O(k)$.
\end{theorem}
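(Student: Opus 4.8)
The plan is to sidestep any probabilistic estimate and exploit the generative structure of the model directly: in the Barab\'{a}si--Albert process each edge is created by the later of its two endpoints, and each node, upon joining, creates exactly $m'$ new edges. Call $v$ the \emph{owner} of an edge $e$ if $e$ is one of the $m'$ edges that $v$ added at the step it was inserted. Then every edge of the final graph has an owner, with the sole exception of the $\binom{m_0}{2}$ edges of the initial clique, which is a constant number.

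The core of the argument is then purely combinatorial and deterministic. Fix any node set $S$. An edge with both endpoints in $S$ is either one of the clique edges or is owned by some $v \in S$ --- namely, its later endpoint --- and each such $v$ owns at most $m'$ edges. Hence the number of edges induced by $S$ is at most $m'\abs{S} + \binom{m_0}{2}$. I would now simply take $S$ to be the $k$-rich-club: this is \emph{some} set of exactly $k$ nodes, whatever it is and however ties among equal degrees are resolved, so it spans at most $m'k + \binom{m_0}{2}$ edges. Since $m'$ and $m_0$ are fixed constants of the model, this is $O(k)$; the bound is in fact deterministic, hence a fortiori it holds in expectation as the statement requires.

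The only points needing a sentence of care are: (i) recording that under the paper's convention ($m_0 = m'$, clique start) every inserted node contributes exactly $m'$ edges, so ``owns at most $m'$ edges'' is tight up to the clique term; and (ii) emphasizing that the counting bound is oblivious to \emph{which} $k$ nodes form the rich-club, so one needs no concentration result pinning down the top-$k$ degrees. I do not expect any genuinely hard step here. The one temptation to resist is to instead compute $\mathbb{E}$ of the number of edges among the $k$ \emph{oldest} nodes and then argue separately that the rich-club nearly coincides with that set; that route is more work, though it does have the side benefit of a matching $\Omega(k)$ lower bound, since for the oldest $k$ nodes every created edge stays inside the set, giving exactly $m'(k - m_0) + \binom{m_0}{2}$ internal edges.
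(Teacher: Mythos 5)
Your proof is correct and follows essentially the same route as the paper's: both arguments charge each internal edge of the rich-club to the (at most $m'$) edges created by its later endpoint, yielding a deterministic $O(k)$ bound that is oblivious to which $k$ nodes actually form the rich-club. Your version is merely more careful in naming the edge-ownership argument and accounting for the $\binom{m_0}{2}$ initial clique edges, which the paper's terser proof glosses over.
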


\begin{proof}
No matter which nodes belong to the $k$-rich-club, each node has $m$ out-going edges
in the BA model. Hence the total number of edges within the rich-club cannot exceed
$2k\cdot m$. As a consequence, the rich-club is not dense if $k$ is not in the same order
of magnitude as $m$.
\end{proof}

A more challenging task from a theoretical point is to prove that the Affiliation model does not capture
the properties of the rich-club  of complex systems or social networks.
We leave this as a conjecture in this work and provide more empirical evidence in the next section.

\cancel{
\subsection{Seniority}
Besides a high degree, what other properties do the rich-club members have? It is known that there is a strong correlation between
high degree and time of arrival to the network \cite{albert2002statistical,krapivsky2002statistics,kwak2010twitter}.
We call nodes that arrive early \emph{senior} members of the network. 
We would like to point out the arrival order of rich-club nodes in the Affiliation model compared to Wikipedia.
Fig.~\ref{fig:seniority} shows that in the Wikipedia graph the members of the 10'000-rich-club are indeed mostly seniors, i.e., they arrived early (low $y$-axis value).
On the other hand, Fig.\ref{fig:seniority} exposes what we think can be a major problem in the Affiliation model.
The figure shows that significant number of the 10'000-rich-club are non-senior members, i.e., there are many nodes that arrived late (high $y$-axis value) but have a very high degree (low $x$-axis value). This can be intuitively understood from the model: in the Affiliation model, a late comer (i.e., non-senior) node usually joins a popular affiliation in the copying process. Once it joined an affiliation
its degree is (immediately!) at least the size of the affiliation. This leads to a situation where all members of the largest affiliation (of which many members are not senior) are part of the rich-club. We can clearly see these phenomena in Fig.~\ref{fig:seniority}. The nodes in the same  ``black wave'' in the plot belong to the same affiliation.


%

\begin{figure}[t!]
	\centering \sffamily
	\footnotesize{Wikipedia}
	\includegraphics[width=\columnwidth]{Wikipedia_Degree_small.pdf}
	\footnotesize{Affiliation model}
	\includegraphics[width=\columnwidth]{Affiliation1M_Degrees_small.pdf}
	\caption{(Top) Seniority in Wikipedia: high correlation between order of arrival ($y$-axis) and order of degree ($x$-axis). Most nodes depicted in this plot have arrived early and they belong to the 10'000-rich-club.
	(Bottom) Seniority in Affiliation model: many late comers (non-senior members) are part of the 10'000-rich-club}
	\label{fig:seniority}
\end{figure}
}

\subsection{Maximum Sociability}
Another measure for the structure and connectivity of the $k$-rich-club is the \emph{sociability}.
The sociability of a graph is its normalized average degree, i.e., for $k$-rich-club with $m_k$ edges among the rich-club members this value is $$\frac{m_k/k }{\max_{1\leq k'\leq n}{m_k'/k'}}.$$ For a graph of growing size the maximum sociability captures
the size of the network at which its members are, on average, most socially involved (or influenced) in the community.
As mentioned earlier the average degree of the BA model is the same for any $k$-rich-club and therefore its sociability level is more or less constant after 5\% of the network size.
In contrast, real social networks are significantly different with the maximum sociability achieved at a $k$-rich-club of size around $n^{0.6}$.
This can be seen in Fig.~\ref{fig:sociability}, where the figure shows that the maximum is achieved at a small scale $k$-rich-club. Interestingly, all real social networks have a single peak for the maximum, this may indicate that this point is a good candidate to define the "right" size of the rich-club.
An exception to this rule is the Wikipedia graph with two maxima.
When examining Fig.~\ref{fig:sociability}, we notice that the maxima occur before or after $k=\sqrt{m}$ in some networks. One possible explanation is that our data sets are not complete, i.e. some nodes and edges are missing; another is that these networks are not in a balanced state, i.e., the elite will grow or shrink until an equilibrium is reached.

\begin{figure}[t!]
	\centering			\includegraphics[width=\columnwidth]{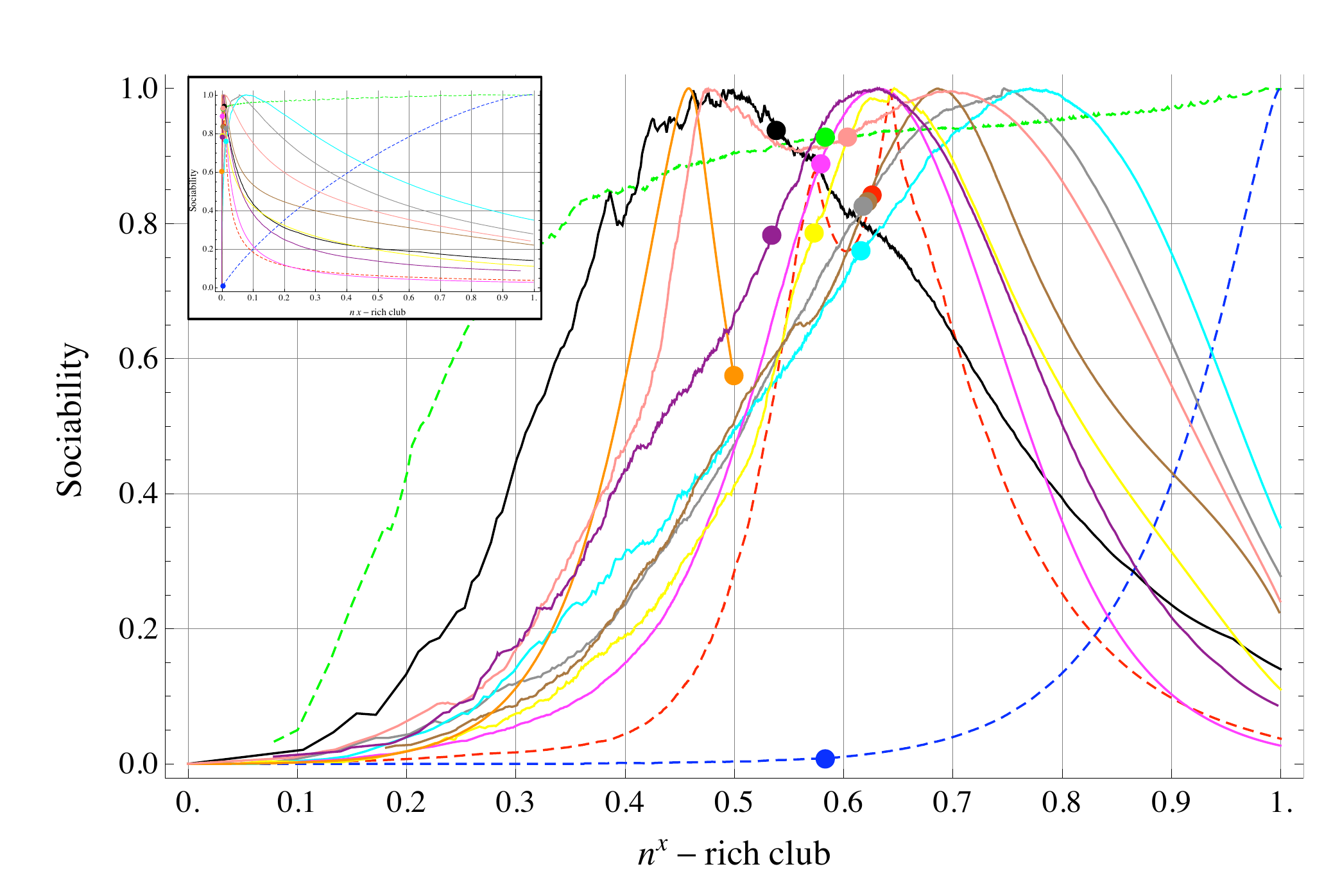}
	\caption{Maximum Sociability: This graph depicts the number of rich-club edges divided by the number of rich-club nodes with the maximum normalized to one. For a $k$-rich-club with $m_k$ edges, this value is $\frac{m_k/k }{\max_{1\leq k'\leq n}{m_k'/k'}}$. This ratio is equal to the average degree of the rich-club nodes.}
	\label{fig:sociability}
\end{figure}

\subsection{Elite Connectivity}
In social networks, the largest connected component (LCC) typically covers almost all nodes of the network. However, this does not imply that for any graph with a large LCC, it must hold that the LCC of the $k$-rich-club contains almost all $k$ nodes. E.g., we found when analyzing the size of the LCC of the $\sqrt{m}$-rich-club reveals that almost all nodes in the rich-club of the social networks belong to the LCC. The same holds for the BA and Affiliation model.
In the ER graph, however, most rich-club nodes do not have any edges to other rich-club nodes; hence, the rich-club is split into many separate components, most of them consisting of one node only.
\begin{table}[h]
\centering
		\begin{tabular}{|l|r|r|r|}
\hline
\multicolumn{1}{|c|}{data}	& \multicolumn{1}{c|}{$\sqrt{m}$-rich-club}  &	 \multicolumn{1}{c|}{\# comp}	&	 \multicolumn{1}{c|}{LCC} \\      \hline	
YouTube&	1729	&	9	&	1721	\\	\hline
Facebook&	903	&	1	&	903	\\	\hline
LiveJournal&	7011	&	16	&	6978		\\	\hline
Orkut&	10824	&	13	&	10812	\\	\hline
Flickr&	4778	&	1	&	4778	\\	\hline
Author Citations&	1110	&	1	&	1110		\\	\hline
Wikipedia&	6039	&	2	&	6038		\\	\hline
AS&	274	&	4	&	271	\\	\hline
ER &	3158	&	2888	&	3		\\	\hline
BA&	3158	&	1	&	3158	\\	\hline
Affiliation&	5665	&	1	&	5665	\\	\hline
\end{tabular}
\caption{Connectivity table of $\sqrt{m}$-rich-club. This table summarizes the number of connected components the $\sqrt{m}$-rich-club and the size of its largest connected component (LCC).}
\end{table}

\cancel{

\begin{figure}
	\centering
	\includegraphics[width=0.99\columnwidth]{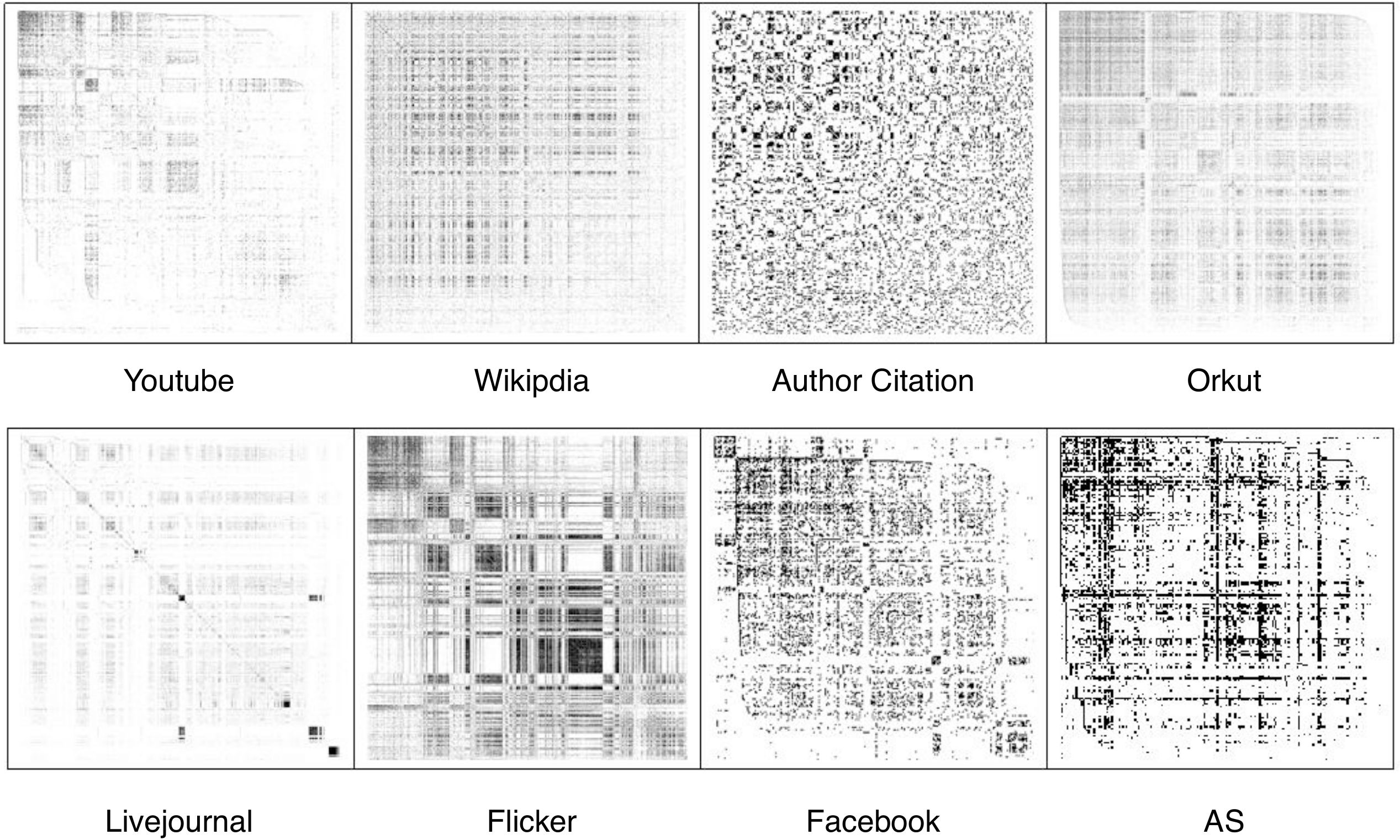}
	\includegraphics[width=0.76\columnwidth]{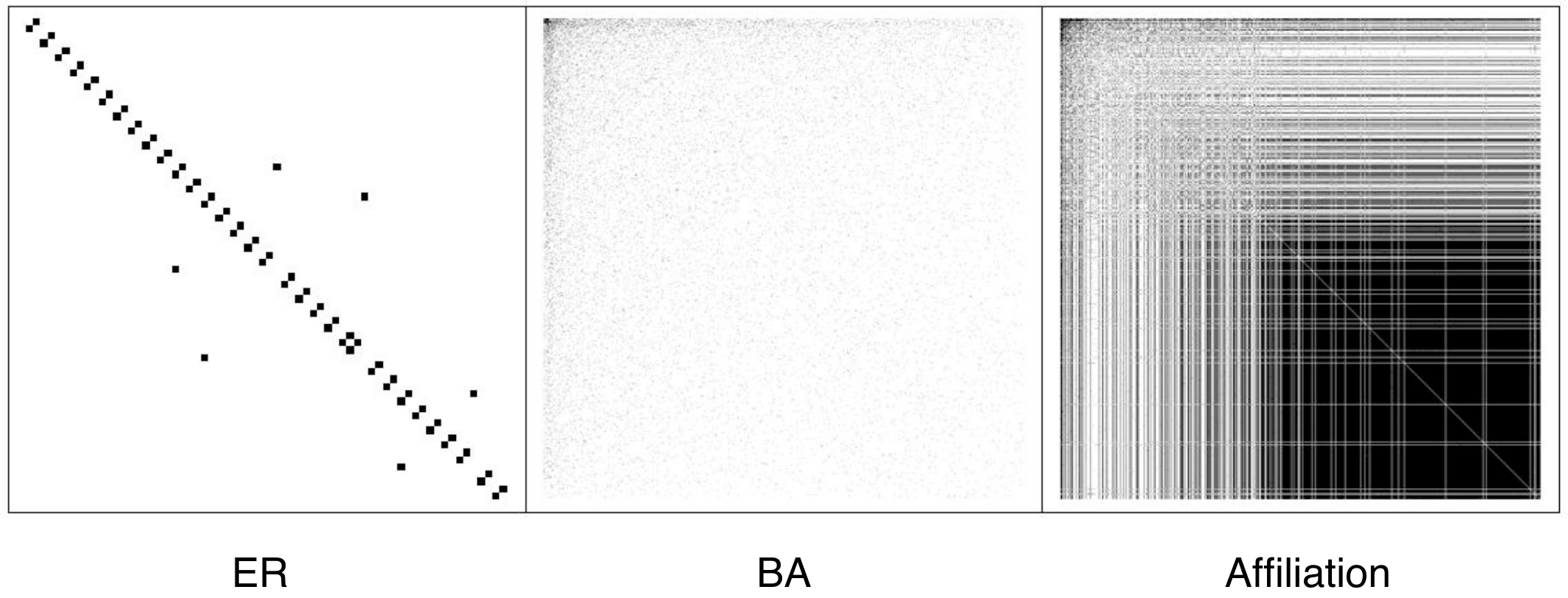}
	\caption{Block diagrams of adjacency matrices generated by Mathematica: A black pixel stands for an edge between two nodes while a white pixel implies that no edge connects these nodes. The first two rows present the adjacency matrices of the $\sqrt{n}$-rich-club of the nine networks we analyze. The last row shows the adjacency matrices of the graphs generated by the ER, BA and Affiliation models.}
	\label{fig:block}
\end{figure}
To find out which parts of the rich-club are more or less connected, we applied block modeling, a popular analysis technique for social networks (see \cite{Wouter-05}, Chapter~12 for more information). Block modeling is typically applied on dense graphs. Consequently, it is an ideal tool to study the rich-club. Block modeling uses the adjacency matrix as a computational platform for visualization. Traditionally the main problem of block modeling is to find a good permutation to identify structures and patterns in the network. As we can see in Fig.~\ref{fig:block} it turns out that for the $\sqrt{n}$-rich-club the degree in the whole network is a good order. The block diagrams illustrate that the $\sqrt{n}$-rich-clubs are strongly structured, i.e., the entropy is low, and the structure varies from network to network. Common to all of the networks is the fact that the connectivity and density (darkness) increases the lower the degrees are. The Affiliation model however is most highly connected among the low degree nodes of the $\sqrt{n}$-rich-club.

\textbf{Flow in the rich-club}
A central tool to understand networks and how their nodes are connected to one another is measuring the flow between two nodes. Computing the flow in social networks can be used to predict how stable a network is or if it tends to split see for example the paper by Leskovec at al \cite{Leskovec-2010}. Hence it is a very important measurement. However, computing the flow is a difficult problem, especially for large networks.
The question is whether we can find a good rule to approximate the flow quickly. To our surprise we can be estimate the flow of between two nodes $v,u$ in a social network by the $\min\{d_v,d_u\}$. It is clear that the flow between two nodes is at most the minimum degree of two nodes i.e.  $\min\{d_v,d_u\}$. Closer inspection shows that in most cases the flow is indeed is equal to the minimum minimum degree see Table \todo{Table with flow}

One the other hand one can ask what is the node connectivity of the rich-club. In this case it look like there two different types of social networks. One with high nodes-connectivity and the second kind is low nodes-connectivity see table .... This property is specially important since it can be use to diagnostic parameter for social networks.''
}

\subsection{Symmetry}
In some networks the existence of an edge describes a reciprocal, symmetric relation between the two nodes involved (undirected network), whereas in other networks an edge from node $a$ to node $b$ (directed network) means that $a$ has a certain relationship with $b$ but not necessarily the other way around.
Classically, sociologists make a distinction between directed networks and undirected networks when analyzing them. E.g., the first question of a decision tree for the analysis of cohesive subgroups on page 78 of \cite{Wouter-05} is ``Is the network directed?'' The mathematical tools that are used differ depending on the answer, e.g., the notion of prestige (in-degree) does only apply to directed networks. On the other hand, in undirected graphs, degree centrality is used (see \cite{Wasserman-Faust-94}, Chapter 5). Clearly the directed graph model contains more information than its equivalent undirected version. However, in many networks it is impossible or difficult to derive who initiated a relationship and/or what the direction of an edge is.
For directed networks a natural question is whether the rich-club of the network is more symmetric than the rest of the network. Of our datasets the networks  Wikipedia, Flickr, YouTube, Twitter and ER graph are directed. The average symmetric degree in the rich-club has a unique maximum in all three real networks. The maximum ``ordinary'' average degree of the rich-club is reached slightly after the maximum of the symmetric rich-club degree.
Furthermore, it holds that the rich-club of the networks is more symmetric: the ratio between symmetric edges and all edges in the $k$-rich-club starts at almost 1 for $k=2$ and then decreases rather quickly until reaching almost zero when $k$ approaches $n$. At around the maximum sociability ($k\approx\sqrt{n}$) the symmetric edges are still a significant fraction of all edges.
In the ER graph model there are no symmetric edges which is not surprising for the chosen edge probability. Since the BA model and the Affiliation model are undirected, they cannot help to explain or model the high symmetry within the rich-club.
\cancel
{
\begin{figure}	\label{fig:symmetry}
	\centering \sffamily
	Wikipedia\\
		\includegraphics[width=\columnwidth]{wikipedia_growth_sym.pdf}\\
	Flickr\\
		\includegraphics[width=\columnwidth]{Flickr-links_sym.pdf}\\
	YouTube\\
		\includegraphics[width=\columnwidth]{YouTube_growth_sym.pdf}\\
	\caption{The symmetry level varies among different networks, as depicted in these graphs of Wikipedia, Flickr and YouTube (from top to bottom). The measurements have been normalized to have there maximum at one. The dark line shows the ratio between the number of symmetric edges in the rich-club and the number of rich-club nodes, i.e., the average symmetric degree in the rich-club. As a comparison the dashed line shows the number of all edges in the rich-club divided by the number of nodes in the rich-club, i.e., the average degree (sociability).  The dotted line depicts the ratio between the number of symmetric edges and the total number of edges in the rich-club.}
\end{figure}
}

In addition we counted the number of symmetric edges in the $\sqrt{n}$-rich-club of Twitter. In the following table we can see that 89\% of the edges in the Twitter $\sqrt{n}$-rich-club are reciprocal, while in the whole Twitter network 22.1\% of all edges are reciprocal~\cite{kwak2010twitter}. \\ \\
\begin{small}
\begin{tabular}{|l|r|r|r|r|r|}
\hline
 \multicolumn{1}{|c|}{$m_{rc}$} &  \multicolumn{1}{c|}{total} &  \multicolumn{1}{c|}{min}  &  \multicolumn{1}{c|}{max} &   \multicolumn{1}{c|}{median}   &  \multicolumn{1}{c|}{avg}   \\ \hline
 directed &5,537,573 &0   & 3,778   & 656  & 852.07    \\ \hline
reciprocal &  4,952,210& 0   & 3,238  &  512   &762.00   \\ \hline
\end{tabular}
\end{small}
\\

When considering Twitter we notice that the $\sqrt{n}$-rich-club features especially high symmetry. One possible explanation for this is that the rich-club of Twitter is much larger than in the other networks and that this increases the social pressure on each of its members to increase the symmetry. Another explanation is that for Twitter many tools exist that help Twitter users to organize their tweets, followers and the users they are following. Among other features, some of these tools offer the functionality to add a new follower to the list of people they are following. Presumably many of the high degree Twitter users apply such a software and ``follow back'' their followers automatically.
In order to find out if one of these theses is true, it is necessary to scrutinize data of other large networks and observe how the symmetry percentage changes with growing network size.

\cancel{
\textbf{Triangles}
We believe that the one of the reason densification of the rich-club some kind of transitivity. One way to measure transitivity is to count the number of triangles. Using the number of triangles one can compute the cluster coefficient.

\todo{why are triangles important}
\begin{table}[h]
	\centering
		\begin{tabular}{|r|r|r|r|}
		\hline
data		& rich-club $n$ & \# triangles &				avg per node	\\      \hline
YouTube 	&	1068	&	7300	&	6.84	\\      \hline
Wikipedia 	&	1368	&	7416	&	5.42	\\      \hline
author citations 	&	292	&	943	&	3.23	\\      \hline
Orkut 	&	1753	&	12748	&	7.27	\\      \hline
LiveJournal 	&	2282	&	7773	&	3.41	\\      \hline
Flickr 	&	1518	&	40678	&	26.80	\\      \hline
Facebook 	&	253	&	890	&	3.52	\\      \hline
AS 	&	184	&	208	&	1.13	\\      \hline
Twitter	&	6400	&	1672376581 	&	261308.84 	\\      \hline
ER	&	1000	&	340	&	0.34	\\      \hline
BA & 1000 & 6431 & 6.43 \\ \hline
Affiliation & 1000 & 17803 & 17.80 \\ \hline
		\end{tabular}
\end{table}
We observe that in all the social networks each rich-club node is part of at least 3 triangles, where the average is less than 1.6 in the technical networks ``internet'' and ``AS''. The average number of triangles in flitter is extremely high, each node is involved in more than 750'000 triangles. In other words, almost 4\% of all possible triangles $(n_{rich-club}\cdot(n_{rich-club}-1)\cdot(n_{rich-club}-2)/6)$
are present. Let us give a closer look at the Twitter rich-club. While there are nodes that do not belong to any triangle, the maximum number is almost 5 million and the median 208,964.
\begin{table}[h]
	\centering
		\begin{tabular}{c|c|c|c|c}
\# triangles		& min & max &	median & average	\\      \hline
Twitter &
0     &       4,905,303   &       208,964       &       771,984.88
	\\      \hline
		\end{tabular}
\end{table}
The number of nodes a certain node $v$ can convince of its opinion by having two or more paths to other nodes is lower bounded by the number of neighbors that are part of triangles originating at $v$. For Twitter we analyzed the relationship between the number of triangles and the influence of its nodes in Fig.~\ref{fig:triangles}
\begin{figure}[h]
	\centering
		\includegraphics[width=\columnwidth]{triangles.png}
	\label{fig:triangles}
	\caption{a) the blue curve plots the number of triangles of the ith node (nodes orderered by their number of triangles) in the Twitter rich-club
b) the red curve plots the number of nodes the ith node can convince (because there is at least one triangle ``convincing'' a node,  nodes orderered by their convincing number)
}
\end{figure}
}

\section{Related Work}\label{sec:relwork}
One of the first papers about the fact that the highest degree nodes are
well connected examined the Autonomous Systems network~\cite{zhou2004rich} and coined the term rich-club coefficient for the ratio comparing the number of edges between nodes of  degree greater than $k$ to the possible number of edges between these nodes.  Colizza et al.~\cite{colizza2006detecting} refined this notion to account for the fact that higher degree nodes have a higher probability to share an edge  than lower degree vertices. They suggest to use baseline networks to avoid a false identification of a rich-club. More precisely they propose to use the  rich-club coefficient of random uncorrelated networks and/or the rich-club  coefficient of network derived by random rewiring of edges, while maintaining the degree distribution of the network. Weighted versions of the rich-club coefficient have been studied
in~\cite{opsahl2008prominence,serrano2008rich,zlatic2009rich}
The question how the rich-club phenomenon manifests across hierarchies is studied in~\cite{mcauley2007rich}.

As identifying the most influential nodes in a network is crucial to understand its members behaviour, many other articles considered a variety of notions related to the elite and/or the rich-club.
Mislove et al.~\cite{mislove-2007-socialnetworks} defined the \emph{core} of a network to be any (minimal) set of
nodes that satisfies two properties: First, the core must be
necessary for the connectivity of the network (i.e., removing
the core breaks the remainder of the nodes into many small,
disconnected clusters). Second, the core must be strongly
connected with a relatively small diameter.
As a consequence, a core is a small group of well-connected group of nodes that is necessary to keep the remainder of the network connected.
Mislove et al. used an approximation technique previously used in Web graph analysis,
removing increasing numbers of the highest
degree nodes and analyzed the connectivity of the remaining
graph. The core is thus the largest remaining strongly connected component.
They observed that within these cores the
path lengths increase  with the size of the core when progressively including nodes ordered
inversely by their degree. The graphs they studied in \cite{mislove-2007-socialnetworks} have a densely connected core
comprising of between 1\% and 10\% of the highest degree
nodes, such that removing this core completely disconnects
the graph.

Another definition for a core can be found in \cite{borgatti2000models}. Borgatti and Everett measured how close the adjaceny matrix of a graph is to the  block matrix  $\{\{1,1\},\{1,0\}\}$. This captures the intuitive conception that social networks have a dense, cohesive core and a sparse, unconnected periphery. Core/periphery networks revolve around a set of central nodes which are well-connected with each other, and also with the periphery. Peripheral nodes, in contrast, are connected to the core, but not to each other. On the other hand, there are "clumpy" networks consisting of two or more subgroups that are well-connected within each group but weakly connected across groups -- like a collection of islands.
When comparing networks with the same density, core/periphery networks have shorter average path lengths than clumpy networks.
In addition to formalizing these intuitions, Borgatti and Everett devised algorithms for detecting core/periphery structures, along with statistical tests for testing a priori hypotheses~\cite{borgatti2002ucinet}.

The nestedness of a network represents the likelihood of a node to be connected to the neighbors of higher degree nodes. When examining this property, block modeling of adjacency matrices arranged by the degree of the nodes is also used. E.g., Lee et al~\cite{lee2012scaling} studied such block diagrams for complex network models, and defined a simple nestedness measure for unipartite and bipartite networks to capture the degree to which different groups in networks interact.

Apart from analyzing the most influential nodes, many articles have studied a wide range of properties of social networks. E.g., the networks created by YouTube, Flickr, Facebook, Wikipedia and LiveJournal have been analyzed in depth in \cite{mislove-2007-socialnetworks, mislove-2008-flickr,viswanath-2009-activity}. Twitter has been studied for its applicability to spot trends, homophily and rumour spreading \cite{kwak2010twitter,jansen2009Twitter}. 
In addition there is a large body of papers studying the evolution of social networks \cite{albert2002statistical,leskovec2008microscopic,kumar2010structure,viswanath-2009-activity}, 
information dissemination and path lengths \cite{albert2002statistical,kempe2003maximizing, leskovec2008planetary,goel2009social}, and community structure \cite{leskovec2008statistical}, to name but a few examples.

\section{Discussion and Open Problems}\label{sec:discussion}
In order to make a step forward towards finding the ``correct'' size of the rich-club, we used rich-club expansion to determine a subset which
exhibits significant structural difference and influence to the rest of the network. This is closely related to the question of finding the elite of a complex network. Based on three axioms and on measurements performed on nine real-world networks, we conclude that the elite of a network consists of around $\sqrt{m}$ nodes and the $\sqrt{m}$-rich-club serves as a good approximation for the elite.

Our results do not only advance the theoretical understanding of the elite of social structures, but may also help to organise institutions better or to identify sources of power in social networks.

Reinforcing the claims of previous work on high degree nodes, our data analysis shows that many complex networks have a small subgraph which is much more dense than their complete network. In addition the structure of the whole network is influenced by this rich-club. This can be exploited to find good candidate networks for the  problem of finding the most dense subgraph (a NP-hard problem \cite{KortsarzP93} on general graphs). One can apply the following procedure: Sort the nodes according to their degrees and choose the most dense subgraph among the subgraphs that contain the first $k$ highest degree nodes. We hope that this heuristic can be turned into an approximation algorithm once there are better models that capture rich-club properties of complex networks.

In addition to the above we provide answers to the central question of how symmetry is spread among the edges of directed social networks. We show that edges inside the rich-club are much more symmetric than random edges that are not inside the rich-club. 

Apart from the size of the elite, an intruiging endeavour is to study the evolution of the elite. Interesting future research questions include the following: Are there universal rules for elite dynamics? Are early members more likely to remain in the elite if the network grows? Are there strategies for late comers to join the elite?



\section{Acknowledgements}
We would like to thank the anonymous reviewers and arXiv readers for their suggestions for improvements and pointers to related work.

\bibliographystyle{abbrv}
\bibliography{sigproc}
\end{document}